\newtheorem{theorem}{Theorem}
\newtheorem{corollary}{Corollary}
\newtheorem{lemma}{Lemma}
\newtheorem{definition}{Definition}
\newtheorem{remark}{Remark}
\newenvironment{proof}[1][Proof]{\noindent\textbf{#1.} }{\ \rule{0.5em}{0.5em}}
\newcommand{\Ibec}{I_{\mathrm{BEC}}^{(\eta_t)}}
\newcommand{\Ibsc}{I_{\mathrm{BSC}}^{(t)}}
\newcommand{\Ibms}{I_{\mathrm{BMS}}^{(t)}}
\newcommand{\eps}{\varepsilon}
\begin{document}
\title{Lower Bounds on Mutual Information for Linear Codes Transmitted over Binary 
Input Channels,
and for Information Combining} 
%\title{Linear Codes Transmitted Over a Binary Symmetric Channel: A Lower Bound on the Mutual Information via the Erasure Channel} 
% %%% Single author, or several authors with same affiliation:
% \author{%
%  \IEEEauthorblockN{Andrew R.~Barron}
%  \IEEEauthorblockA{Department of Statistics and Data Science\\
%                    Yale University\\
%                    New Haven, CT, USA\\
%                    Email: andrew.barron@yale.edu}
% }

%%% Several authors with up to three affiliations:
\author{%
  \IEEEauthorblockN{Uri~Erez}
  \IEEEauthorblockA{Electrical Engineering\\
                    Tel Aviv University\\
                    Tel Aviv, Israel\\
                    Email: uri@eng.tau.ac.il}
    \and
      \IEEEauthorblockN{Or~Ordentlich}
  \IEEEauthorblockA{Computer Science and Engineering\\
                    Hebrew University of Jerusalem\\
                    Jerusalem, Israel\\
                    Email: or.ordentlich@mail.huji.ac.il}
  \and
  \IEEEauthorblockN{Shlomo Shamai (Shitz)}
  \IEEEauthorblockA{Electrical and Computer Engineering\\ 
                    Technion\\
                    Haifa, Israel\\
                    Email: sshlomo@ee.technion.ac.il}
}

%%% Many authors with many affiliations:
% \author{%
%   \IEEEauthorblockN{Andrew R.~Barron\IEEEauthorrefmark{1},
%                     Claude E.~Shannon\IEEEauthorrefmark{2},
%                     David Slepian\IEEEauthorrefmark{2},
%                     and Jacob Ziv\IEEEauthorrefmark{2}\IEEEauthorrefmark{3}}
%   \IEEEauthorblockA{\IEEEauthorrefmark{1}%
%                    Department of Statistics and Data Science, Yale University, New Haven, CT, USA,
%                     andrew.barron@yale.edu}
%   \IEEEauthorblockA{\IEEEauthorrefmark{2}%
%                     Bell Telephone Laboratories, Inc.,
%                     Murray Hill, NJ, USA,
%                     \{csh,dsl,jz\}@bell-labs.com}
%   \IEEEauthorblockA{\IEEEauthorrefmark{3}%
%                     Department of Electrical Engineering, Technion---Institute of Technology, Haifa, Israel,
%                     jz@ee.technion.ac.il}
% }

\maketitle

%%%%%%
%% Abstract: 
%% If your paper is eligible for the student paper award, please add
%% the comment "THIS PAPER IS ELIGIBLE FOR THE STUDENT PAPER
%% AWARD." as a first line in the abstract. 
%% For the final version of the accepted paper, please do not forget
%% to remove this comment!
%%

\begin{abstract}
  It has been known for a long time that the mutual information between the input sequence and output of a binary symmetric channel (BSC) is upper bounded by the mutual information between the same input sequence and the output of a binary erasure channel (BEC) with the same capacity. Recently, Samorodintsky discovered that one may also lower bound the BSC mutual information in terms of the mutual information between the same input sequence and a more capable BEC. In this paper, we strengthen Samordnitsky's bound for the special case where the input to the channel is distributed uniformly over a linear code. Furthermore, for a general (not necessarily binary) input distribution $P_X$ and channel $W_{Y|X}$, we derive a new lower bound on the mutual information $I(X;Y^n)$ for $n$ transmissions of $X\sim P_X$ through the channel $W_{Y|X}$. 
\end{abstract}

\section{Introduction}

Let $P=P_{Y|X}$ and $Q=Q_{Z|X}$ be two channels with a common input alphabet $\mathcal{X}$ and output alphabets $\mathcal{Y},\mathcal{Z}$, respectively. Three common criteria for comparing/partially-ordering them are~\cite{CsiszarKornerBook,el2011network,PW_book}:
\begin{itemize}
	\item We say that  $P$ is degraded with respect to $Q$ if there exists a third channel $W=W_{Y|Z}$ with input alphabet $\mathcal{Z}$ and output alphabet $\mathcal{Y}$ such that $P=W\circ Q$, that is, $P$ is the composition of $W$ and $Q$.
	\item We say that $P$ is more noisy than $Q$ if for any distribution $P_{UX}$ we have that $I(U;Y)\leq I(U;Z)$, where $Y$ is obtained by feeding $X$ to $P$, and $Z$ is obtained by feeding $X$ to $Q$.
	\item We say that $P$ is less capable than $Q$ if for any distribution $P_{X}$ we have that $I(X;Y)\leq I(X;Z)$, where $Y$ is obtained by feeding $X$ to $P$, and $Z$ is obtained by feeding $X$ to $Q$.
\end{itemize}
Clearly if $P$ is degraded with respect $Q$, it is also more noisy than it, and similarly, if $P$ is more noisy than $Q$, it is also less capable than it. Furthermore, we have the following tensorization property for the three criteria: If $P$ is degraded with respect to (respectively, more noisy, less capable than) $Q$, then $P^{\otimes n}$ is degraded with respect to (respectively, more noisy, less capable than) $Q^{\otimes n}$~\cite[Problem 6.18]{CsiszarKornerBook},~\cite{PW17_SDPI,makur2018comparison}. Here, $P^{\otimes n}$ denotes the product channel from $\mathcal{X}^n$ to $\mathcal{Y}^n$ obtained by applying $P$ independently on each coordinate of $X^n$.

The above criteria are useful whenever computing  mutual information expressions involving $P$ is hard, whereas computing the same expressions with a channel $Q$ which dominates $P$ under the criterion relevant to the problem, is significantly easier. A canonical choice for $Q$ is the erasure channel, which outputs $Z=X$ with probability $1-e$ and outputs $Z=?$ with probability $e$. This is a convenient choice because computing mutual information expressions involving the erasure channel is often a feasible task, and furthermore, finding the ``dirtiest'' erasure channel (that is, the erasure channel with the largest $e$) that is less noisy than $P$ is equivalent to computing the strong data processing inequality (SDPI) coefficient of the channel $P$~\cite{PW17_SDPI}. The SDPI coefficient of a channel $P_{Y|X}$ is defined as
\begin{align*}
\eta(P_{Y|X})=\sup_{P_{UX}}\frac{I(U;Y)}{I(U;X)},
\end{align*}
where the supremum is with respect to all Markov triplets $U-X\stackrel{P_{Y|X}}{-}Y$. 

Computation of $\eta(P_{Y|X})$ reduces to computation of the SDPI coefficients of all binary sub-channels induced by $P_{Y|X}$, i.e., all channels obtained by restricting the input to two symbols $\{x_0,x_1\}\subset\mathcal{X}$~\cite{OP21}, for which closed-form expressions and bounds exist~\cite{PW17_SDPI}.

A special case that has received considerable attention in the literature is taking the channel $P_{Y|X}$ as a binary symmetric channel (BSC) with capacity $t\in[0,1]$. Let 
\begin{align*}
    h(p)=-p\log p-(1-p)\log(1-p),
\end{align*}
and let $h^{-1}$ be its inverse restricted to $[0,1/2]$, where throughout the paper all logarithms are taken to base $2$.
It is well known~\cite[Example 5.4]{el2011network} that this channel is degraded with respect to a binary erasure channel (BEC) with capacity $1-2h^{-1}(1-t)$, is more noisy than a BEC channel with capacity $(1-2h^{-1}(1-t))^2$, and is less capable than a BEC with capacity $t$. 
In particular, for any input distribution $X^n$ and $0\leq t\leq t_1\leq 1$, we have that 
$$\Ibsc(X^n;Y^n)\leq I_{\mathrm{BEC}}^{(t_1)}(X^n;Y^n),$$
where $\Ibsc(X^n;Y^n)$ denotes the mutual information between $X^n$ and the output of a memoryless BSC channel with capacity $t$, and $I_{\mathrm{BEC}}^{(t_1)}(X^n;Y^n)$ the mutual information between $X^n$ and the output of a memoryless BEC channel with capacity $t_1$. 

Thus, in cases where the mutual information between  the input vector and the output of a BEC channel can be computed/estimated, we immediately obtain upper bounds on the mutual information for the case of a BSC channel. 

In fact, it is well-known that among all binary-input memoryless output-symmetric (BMS) channels (see Definition~\ref{def:BMS} below) with the same capacity, the BEC is the most capable, and the BSC is the least capable~\cite{sasoglu2011thesis}. This implies that for any input distribution $X^n$, any BMS channel, and $0\leq t_{-}\leq t\leq t_{+}\leq 1$ we have
$$I_{\mathrm{BSC}}^{(t_-)}(X^n;Y^n)\leq \Ibms(X^n;Y^n)\leq I_{\mathrm{BEC}}^{(t_+)}(X^n;Y^n).$$
Thus, obtaining bounds in the other direction, i.e., lower bounds on $\Ibsc(X^n;Y^n)$ in terms of  $I_{\mathrm{BEC}}^{(t_0)}(X^n;Y^n)$ is desirable, as it enables to bound $\Ibms(X^n;Y^n)$ from above and below using only mutual information expressions involving the BEC.

However, deriving such bounds is a more challenging task. The main reason for this is that for $t\in(0,1)$, there is no $0<t_0$ for which the BEC with capacity $t_0$ is less capable than the BSC with capacity $t$. 

While lower bounds of the form $\Ibsc(X^n;Y^n)\geq I_{\mathrm{BEC}}^{(t_0)}(X^n;Y^n)$ that hold for any $X^n$ are impossible to obtain, in~\cite{Sam16} Samorodnitskty had the somewhat counter-intuitive observation that we can nevertheless lower bound $\Ibsc(X^n;Y^n)$ using the mutual information between $X^n$ and the output of a \emph{less} noisy BEC. In particular, he has shown that for any $X^n$ and any $t_1\geq \eta_t= (1-2h^{-1}(1-t))^2$, it holds that
\begin{align}
 \Ibsc(X^n;Y^n)\geq n\cdot \psi_{t}\left(\frac{I_{\mathrm{BEC}}^{(t_1)}(X^n;Y^n)}{nt_1} \right),   \label{eq:SamIneq}
\end{align}
where $\psi_{t}:[0,1]\to[0,t]$ is some increasing strictly convex function, to be explicitly specified later.\footnote{In fact,~\cite{Sam16} establishes this only for $t_1=\eta_t$, but by Lemma~\ref{lem:BECratiomonotone} proved in the appendix, this holds for all $t_1\geq \eta_t$.}

One of the most interesting applications of Samorodnitsky's result is for the case where $X^n$ is uniformly distributed on some linear code. In this case, the result implies that a code  will attain high mutual information when transmitted over the BSC channel, if it attains high mutual information when transmitted over a BEC channel (though with a different capacity). Our main result in this paper is an improvement of Samorodnitsky's result for this special case where $X^n\sim\mathrm{Unif}(C)$ and $C\subset\{0,1\}^n$ is a linear code. For this special case, we improve Samorodnitsky's bound from~\eqref{eq:SamIneq} to
\begin{align*}
   \Ibsc(X^n;Y^n)\geq  n\cdot \bar{\psi}_{t}\left(\frac{I_{\mathrm{BEC}}^{(t_1)}(X^n;Y^n)}{nt_1} \right), 
\end{align*}
where $\bar{\psi}_{t}(x):[0,1]\to [0,t]=t\cdot x$ is the upper concave envelope of  $\psi_{t}(x)$.

An important special case of a linear code is the repetition code. A uniform distribution on this code corresponds to $X^n=(X,\cdots,X)$ where $X\sim\mathrm{Bern}(1/2)$. For this case the problem of comparing $\Ibsc(X^n;Y^n)=\Ibsc(X;Y^n)$ to $I_{\mathrm{BEC}}^{(t_1)}(X^n;Y^n)=I_{\mathrm{BEC}}^{(t_1)}(X;Y^n)$ is referred to as the \emph{information combining} problem, which has been studied extensively in the literature~\cite{land2006information,sutskover2005extremes}. While the case of $X\sim\mathrm{Bern}(1/2)$ transmitted through $n$ copies of a BSC channel is handled by our main result, we further derive a lower bound for the general case where $X\sim P_X$ is transmitted $n$ times through a channel $W^{\otimes n}$, and show that
\begin{align}
I(X;Y^n)\geq \frac{I(P_X;W)}{\eta(P_X,W)}(1-(1-\eta(P_X,W))^n),
\label{eq:icboundintro}
\end{align}
where $Y^n$ is the output of the channel when $X$ is transmitted $n$ times, $I(P_X,W)=I(X;Y_1)$, and
\begin{align}
\eta(P_X,W)=\sup_{P_{U|X}}\frac{I(U;Y)}{I(U;X)}
\end{align}
is the input-dependent SDPI coefficient of the channel $W$ with input $P_X$. Note that we can further lower bound~\eqref{eq:icboundintro} as
\begin{align}
I(X;Y^n)\geq\frac{1-e^{-n\cdot\eta(P_X,W)}}{\eta(P_X,W)}\cdot I(P_X,W),
\end{align}
which is close to the obvious upper bound $n I(P_X;W)$ for $n\cdot \eta(P_X,W)\ll 1$. Thus, our bound essentially shows that when $n\cdot \eta(P_X,W)\ll 1$ each measurement  contributes about $I(P_X,W)$ bits of information to $I(X;Y^n)$ (as is the case for i.i.d. transmission).

\section{Main Result}

For a random vector $X^n$ on $\{0,1\}^n$, we denote by $I_{\mathrm{BEC}}^{(t)}(X^n;Y^n)$, respectively $\Ibsc(X^n;Y^n)$, the mutual information between $X^n$ and the output of a memoryless BEC channel, respectively BSC channel, with capacity $t$. We denote the strong data processing inequality (SDPI) coefficient of a BSC channel with capacity $t$ by~\cite{ahlswede1976spreading}
\begin{align}
\eta_t= (1-2h^{-1}(1-t))^2 .  
\end{align}
We further denote the ratio between the capacity and the SDPI coefficient by
\begin{align}
\alpha_t=\frac{t}{\eta_t}=\frac{t}{(1-2h^{-1}(1-t))^2}.    
\label{eq:alpha_t}
\end{align}
It can be verified that for all $0\leq t\leq 1$ we have $t \leq \eta_t$, and consequently, $\alpha_t\leq 1$. Furthermore, for all $0<t\leq 1,$
\begin{align*}
\alpha_t>\frac{\log_2(e)}{2}.    
%\label{eq:uniform bound}
\end{align*}
Our main result is the following.
\begin{theorem}
Let $C\subset\{0,1\}^n$ be a linear code, $u\in\{0,1\}^n$ be some shift, and $X^n=X^n_{C,u}\sim\mathrm{Uniform}(C+u)$. Then
\begin{align}
 \Ibsc(X^n;Y^n)\geq t\cdot \frac{\Ibec(X^n;Y^n)}{\eta_t}=\alpha_t\cdot\Ibec(X^n;Y^n).
 \label{eq:mainbound}
\end{align}
\label{thm:main}
\end{theorem}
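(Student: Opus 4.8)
\medskip
\noindent\emph{Proof strategy.}
I would prove Theorem~\ref{thm:main} by induction on the block length $n$, reducing the inductive step to a single-coordinate information-combining inequality that forms the technical core. The base case $n\le 1$ is immediate: either $C+u$ is a single point and both mutual informations vanish, or $C=\{0,1\}$ and $\Ibsc(X^n;Y^n)=t$, $\Ibec(X^n;Y^n)=\eta_t$, $t=\alpha_t\eta_t$. For the inductive step, fix a coordinate $j$ on which $C+u$ is not constant (if none exists, $X^n$ is deterministic and \eqref{eq:mainbound} is trivial) and split $X^n=(X_j,X^{n-1}_{\sim j})$. Conditioned on $X_j$, the vector $X^{n-1}_{\sim j}$ is uniform on a coset of the shortened code $C_{\downarrow j}\subseteq\{0,1\}^{n-1}$, and because the BSC and BEC are memoryless and output-symmetric the channel output at position $j$ becomes irrelevant for $X^{n-1}_{\sim j}$ once $X_j$ is given. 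The chain rule hence yields, for the BSC and verbatim for the BEC (with $\widetilde Y^n$ the BEC output),
\begin{align*}
\Ibsc(X^n;Y^n)&=I(X_j;Y^n)+\Ibsc(X^{n-1}_{\sim j};Y^{n-1}),\\
\Ibec(X^n;Y^n)&=I(X_j;\widetilde Y^n)+\Ibec(X^{n-1}_{\sim j};\widetilde Y^{n-1}),
\end{align*}
with $X^{n-1}_{\sim j}$ uniform on $C_{\downarrow j}$. Applying the induction hypothesis to $C_{\downarrow j}$, the step reduces to the single-coordinate inequality
\begin{align}
I(X_j;Y^n)\ \ge\ \alpha_t\,I(X_j;\widetilde Y^n). \label{eq:singlecoord}
\end{align}

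To approach \eqref{eq:singlecoord} I would use linearity to identify the effective channels seen by the bit $X_j$. Given the BEC erasure pattern, $X_j$ is determined exactly when the $j$-th column of a generator matrix of $C$ lies in the span of the non-erased columns, and is uniform otherwise; hence $X_j\mapsto\widetilde Y^n$ acts as a binary erasure channel of some probability $p_j$, so $I(X_j;\widetilde Y^n)=1-p_j$. On the BSC side, $X_j\mapsto Y^n$ is a binary-input memoryless symmetric channel $W_j$ that splits as the parallel combination of the direct observation $Y_j$---a $\mathrm{BSC}$ of capacity $t$---with an extrinsic observation $Y_{\sim j}$ that is conditionally independent of $Y_j$ given $X_j$; on the BEC side the analogous extrinsic observation is again an erasure channel, and combining it with the direct $\mathrm{BEC}(\eta_t)$ reproduces $1-p_j$. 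So \eqref{eq:singlecoord} asks that combining a $\mathrm{BSC}(t)$ with the BSC-extrinsic channel be at least an $\alpha_t$-fraction as informative as combining a $\mathrm{BEC}(\eta_t)$ with the structurally identical BEC-extrinsic channel.

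The step I expect to be the main obstacle is establishing \eqref{eq:singlecoord}: one must control the extrinsic BMS channel $W'_j$, which in general is not an erasure channel, and show that passing from it to its erasure analogue while at the same time passing from $\mathrm{BSC}(t)$ to the more capable $\mathrm{BEC}(\eta_t)$ costs no more than the factor $\alpha_t=t/\eta_t$. I would try to do this via extremal information-combining arguments together with Mrs.\ Gerber's Lemma and Samorodnitsky's convex function $\psi_t$, using crucially that $\eta_t$ is the strong data-processing constant of $\mathrm{BSC}(t)$. Inequality \eqref{eq:singlecoord}, and hence \eqref{eq:mainbound}, is tight precisely when coordinate $j$ is independent of the rest of the codeword---i.e.\ for coordinate-subcube codes, where $I(X_j;Y^n)=t$ and $I(X_j;\widetilde Y^n)=\eta_t$; since such codes make \eqref{eq:mainbound} tight at every rate, no bound of the form $n\cdot\phi\big(\Ibec(X^n;Y^n)/(n\eta_t)\big)$ with $\phi$ above the linear function $t\cdot x$ can hold for all linear codes, which is exactly why the improvement over \eqref{eq:SamIneq} is to its upper concave envelope. (For the repetition code, \eqref{eq:singlecoord} reads $\Ibsc(X;Y^n)\ge\alpha_t(1-(1-\eta_t)^n)$, matching the specialization of \eqref{eq:icboundintro} to $\mathrm{BSC}(t)$ since $\eta(\mathrm{Bern}(1/2),\mathrm{BSC}(t))=\eta_t$.)
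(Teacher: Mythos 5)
Your reduction is sound up to its final step, but that final step is a genuine gap. The chain-rule split $I(X^n;Y^n)=I(X_j;Y^n)+I(X_{\sim j};Y_{\sim j}\,|\,X_j)$, the identification of the conditional term with the shortened code $C_{\downarrow j}$, and the application of the induction hypothesis to it are all fine. But this leaves the entire burden on the single-coordinate inequality $I(X_j;Y^n)\ \ge\ \alpha_t\, I(X_j;\widetilde Y^n)$, which you state as "the main obstacle" and do not prove. This inequality is not a routine information-combining fact: it compares two channels $X_j\to Y^n$ (BSC side) and $X_j\to\widetilde Y^n$ (BEC side) whose only relationship to each other is through the code, so the "erasure analogue" of the extrinsic channel $W'_j$ is not obtained from $W'_j$ by any channel-level operation --- it is a different object determined by how the dual/spanning structure of $C$ behaves under erasures. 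Proving that these two code-determined channels satisfy a linear relation with constant exactly $\alpha_t=t/\eta_t$ is essentially a per-message-bit version of Theorem~\ref{thm:main} itself, and the tools you propose (extremal combining, Mrs.\ Gerber's Lemma, $\psi_t$) are precisely the ones that yield only the convex bound \eqref{eq:SamIneq}; the whole point of the theorem is to beat that bound, so this route is unlikely to close the gap.

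The paper sidesteps the single-coordinate inequality entirely. It peels off the last coordinate via \eqref{eq:MIgenformula}, i.e.\ $I(X^n;Y^n)=I(X^{n-1};Y^{n-1})+I(X_n;Y_n)-I(Y^{n-1};Y_n)$, bounds the cross term by the SDPI of the BSC, $\Ibsc(Y^{n-1};Y_n)\le\eta_t\,\Ibsc(Y^{n-1};X_n)$ (which holds with \emph{equality} for the BEC of capacity $\eta_t$ --- this is where the specific pair $(t,\eta_t)$ enters), and then rewrites $\Ibsc(Y^{n-1};X_n)=\Ibsc(Y^{n-1};X^{n-1})-\Ibsc(Y^{n-1};X^{n-1}|X_n)$ using the Markov chain $X_n-X^{n-1}-Y^{n-1}$. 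After this substitution the two mutual-information terms over block length $n-1$ (for the punctured code and for the shortened code) both appear with nonnegative coefficients $1-\eta_t$ and $\eta_t$, so the induction hypothesis can be applied to each, and the BEC identity reassembles the result exactly. If you can supply a proof of your single-coordinate inequality your argument would go through (and your tightness discussion for coordinate-subcube codes is correct), but as written the proof is incomplete at its central step.
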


\begin{remark}
We may rewrite~\eqref{eq:mainbound} as
\begin{align}
\frac{\Ibsc(X^n;Y^n)}{nt}\geq \frac{\Ibec(X^n;Y^n)}{n\eta_t},	
\end{align} 
indicating that for (shifted) linear codes, the fraction of capacity over the BSC with capacity $t$ is at least as large as the fraction of capacity over a BEC with capacity $\eta_t$.
\end{remark}

\begin{remark}
In~\cite[Theorem 12]{Sam16} (see also~\cite{PW17_SDPI} and~\cite{ordentlich16}), Samorodnitsky proved that for the BSC with capacity $t$, for any input $X^n$ on $\{0,1\}^n$ it holds that
\begin{align}
H(Y^n)\geq n \cdot\varphi_t\left(\frac{\Ibec(X^n;Y^n)}{n\cdot \eta_t} \right),  \label{eq:SamMGL} 
\end{align}
where $\varphi_t(x)=h\left(h^{-1}(1-t)\star h^{-1}(x)  \right)$ is the function from Mrs. Gerber's Lemma~\cite{wyner1973theorem}. Here,  $a\star b=a(1-b)+b(1-a)$ is the convolution between two numbers $a,b\in[0,1]$. Subtracting $n(1-t)$ from both sides of~\eqref{eq:SamMGL}, we obtain
\begin{align}
&\Ibsc(X^n;Y^n)=H(Y^n)-H(Y^n|X^n)=H(Y^n)-n(1-t)\nonumber\\
&\geq n \cdot\varphi_t\left(\frac{\Ibec(X^n;Y^n)}{n\cdot \eta_t} \right)-n(1-t)\nonumber\\
&=n\cdot \psi_{t}\left(\frac{\Ibec(X^n;Y^n)}{n\cdot \eta_t} \right),
\end{align}
where $\psi_{t}(x)=\varphi_t\left(x \right)-(1-t)$ is defined for $0\leq x\leq 1$. Since $x\mapsto\varphi_t(x)$ is convex, so is $x\mapsto \psi_{t}(x)$. Noting further that $\psi_{t}(0)=0$ and $\psi_{t}(1)= t$, convexity implies that $\psi_{t}(x)\leq t\cdot x$. In particular,
\begin{align}
\psi_{t}\left(\frac{\Ibec(X^n;Y^n)}{n\cdot \eta_t} \right)\leq t\cdot \frac{\Ibec(X^n;Y^n)}{n\cdot \eta_t}.
\end{align}
Comparing this with Theorem~\ref{thm:main}, we see that our bound is always at least as good as Samorodnitsky's. However, while Samorodnitsky's lower bound on $\Ibsc(X^n;Y^n)$ is valid for \emph{any} input $X^n$, our bound is only valid for $X^n$ uniform on a shifted linear code. In fact, it is easy to verify that Theorem~\ref{thm:main} does not hold if one does not impose any assumptions on $X^n$. Indeed, by the convexity of $t\mapsto g(t)=h(p\star h^{-1}(1-t))-(1-t)$, and the fact that $g(0)=0$ and $g(1)=h(p)$, it follows that $g(t)\leq t h(p)$. Thus, for $X^n\sim \mathrm{Bern}^{\otimes n}(p)$ we have
\begin{align}
\frac{\Ibsc(X^n;Y^n)}{t}=\frac{ng(t)}{t}\leq n h(p)=\frac{\Ibec(X^n;Y^n)}{\eta_t}.    
\end{align}
\end{remark}

\begin{definition}[BMS channels]
	A memoryless channel with binary input $X$ and output $Y$ is called \emph{binary-input memoryless output-symmetric (BMS)} if there exists a sufficient statistic $T(Y)=(X\oplus Z_A,A)$ for $X$, where $(A,Z_A)$ are statistically independent of $X$, and $Z_A$ is a binary random variable with $\Pr(Z_A=1|A=a)=a$.
 \label{def:BMS}
\end{definition}

It is well known and easy to verify that among all BMS channels with capacity $t$, the BEC is the most capable one, whereas the BSC is the least capable, see e.g.~\cite{sasoglu2011thesis}. Thus, the following is a straightforward corollary of Theorem~\ref{thm:main}.

\begin{corollary}
Under the assumptions of Theorem~\ref{thm:main}, for any BMS channel with capacity $t$ we have
\begin{align*}
\alpha_t\cdot\Ibec(X^n;Y^n)\leq \Ibms(X^n;Y^n)\leq I_{\mathrm{BEC}}^{(t)}(X^n;Y^n).
\end{align*}
\label{cor:BMS1}
\end{corollary}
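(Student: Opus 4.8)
The plan is to derive the two-sided sandwich by combining three ingredients that are already available to us: the single-letter extremality of the BEC and BSC among BMS channels of a fixed capacity (stated immediately above the corollary and established in \cite{sasoglu2011thesis}), the tensorization of the less-capable ordering recorded in the introduction, and, for the lower edge only, Theorem~\ref{thm:main}. Throughout let $W$ denote the given BMS channel of capacity $t$, so that $\Ibms(X^n;Y^n)$ is the mutual information it induces.

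For the upper bound I would argue as follows. Since the BEC of capacity $t$ is the most capable BMS channel of capacity $t$, the scalar channel $W$ is less capable than the BEC of capacity $t$; that is, $I(X;Y)\le I_{\mathrm{BEC}}^{(t)}(X;Y)$ for every input distribution $P_X$ on $\{0,1\}$. By tensorization of the less-capable relation, $W^{\otimes n}$ is less capable than $(\mathrm{BEC}^{(t)})^{\otimes n}$, so $\Ibms(X^n;Y^n)\le I_{\mathrm{BEC}}^{(t)}(X^n;Y^n)$ holds for \emph{every} (possibly correlated) law on $\{0,1\}^n$, and in particular for $X^n\sim\mathrm{Unif}(C+u)$. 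This gives the right-hand inequality.

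For the lower bound I would run the same tensorization step in the opposite direction. Because the BSC of capacity $t$ is the least capable BMS channel of capacity $t$, the BSC is less capable than $W$ at the scalar level, and tensorization yields $\Ibsc(X^n;Y^n)\le \Ibms(X^n;Y^n)$ for our input $X^n$. Theorem~\ref{thm:main} then supplies $\alpha_t\cdot\Ibec(X^n;Y^n)\le \Ibsc(X^n;Y^n)$, and chaining the two inequalities produces $\alpha_t\cdot\Ibec(X^n;Y^n)\le\Ibms(X^n;Y^n)$, which is the left-hand inequality and completes the sandwich.

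The only genuinely nontrivial input is the scalar extremality of the BEC and BSC, which I would treat as given since it is classical; a self-contained check would use the representation of any BMS channel as a mixture over $A$ of binary symmetric channels with crossover $a$ (Definition~\ref{def:BMS}), under which matching capacities fixes $\mathbb{E}[1-h(A)]=t$ and the comparisons follow from convexity/concavity of the relevant single-letter mutual-information functionals in the crossover parameter. The point I want to stress is conceptual rather than computational: the scalar comparisons cannot be applied directly to the correlated codeword $X^n$, so the crux of the argument is the tensorization property of the less-capable ordering quoted in the introduction, which is precisely what licenses passing from scalar inputs to the correlated uniform-on-$(C+u)$ input. Beyond this, no obstacle remains, since Theorem~\ref{thm:main} already delivers the lower edge of the sandwich.
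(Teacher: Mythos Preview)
Your proposal is correct and follows exactly the approach the paper intends: the corollary is stated there as an immediate consequence of Theorem~\ref{thm:main} together with the well-known extremality of the BEC and BSC among BMS channels of the same capacity, and your write-up simply makes explicit the tensorization step (already recorded in the introduction) that lifts the scalar comparisons to $n$-fold products. Nothing further is needed.
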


Furthermore, since $t\leq \eta_t$, we have that the BEC with capacity $t$ is degraded with respect to the BEC with capacity $\eta_t$. Thus, the following statement immediately follows from Corollary~\ref{cor:BMS1}.
\begin{corollary}
Under the assumptions of Theorem~\ref{thm:main},
\begin{align*}
\alpha_t \cdot I_{\mathrm{BEC}}^{(t)}(X^n;Y^n)\leq  \Ibms(X^n;Y^n)\leq   I_{\mathrm{BEC}}^{(t)}(X^n;Y^n)
\end{align*}
where $\alpha_t$, defined in \eqref{eq:alpha_t}, satisfies $\alpha_t>\frac{\log_2(e)}{2}$, for  all \mbox{$0<t\leq 1$}.
\label{cor:BECtbound}
\end{corollary}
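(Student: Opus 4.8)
The plan is to derive Corollary~\ref{cor:BECtbound} directly from Corollary~\ref{cor:BMS1}. The right-hand inequality $\Ibms(X^n;Y^n)\le I_{\mathrm{BEC}}^{(t)}(X^n;Y^n)$ is literally part of the statement of Corollary~\ref{cor:BMS1}, so nothing new is needed there. For the left-hand inequality, Corollary~\ref{cor:BMS1} supplies $\alpha_t\cdot\Ibec(X^n;Y^n)=\alpha_t\cdot I_{\mathrm{BEC}}^{(\eta_t)}(X^n;Y^n)\le\Ibms(X^n;Y^n)$, so it is enough to show that $\alpha_t\cdot I_{\mathrm{BEC}}^{(t)}(X^n;Y^n)\le\alpha_t\cdot I_{\mathrm{BEC}}^{(\eta_t)}(X^n;Y^n)$; since $\alpha_t>0$ for $t>0$, this amounts to proving that lowering the erasure-channel capacity from $\eta_t$ to the smaller value $t$ cannot increase the mutual information with $X^n$.

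To see the latter, recall from the paragraph preceding the corollary that $t\le\eta_t$, so the BEC of capacity $t$ is degraded with respect to the BEC of capacity $\eta_t$: one recovers the capacity-$t$ channel by passing the output of the capacity-$\eta_t$ channel through a memoryless channel that re-erases each non-erased symbol with probability $(\eta_t-t)/\eta_t$. By the tensorization of degradation and the data processing inequality, both recalled in the introduction, $I_{\mathrm{BEC}}^{(t)}(X^n;Y^n)\le I_{\mathrm{BEC}}^{(\eta_t)}(X^n;Y^n)$ for every input $X^n$. Chaining this with Corollary~\ref{cor:BMS1} yields the left-hand inequality.

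It remains to record that $\alpha_t>\tfrac{\log_2 e}{2}$ for $0<t\le 1$. Setting $\delta\df h^{-1}(1-t)\in[0,\tfrac12)$, we have $t=1-h(\delta)$, $\eta_t=(1-2\delta)^2$, and hence $\alpha_t=(1-h(\delta))/(1-2\delta)^2$. Since $(\ln 2)(1-h(\delta))$ equals the relative entropy (in nats) between $\mathrm{Bern}(\delta)$ and $\mathrm{Bern}(\tfrac12)$, Pinsker's inequality gives $(\ln 2)(1-h(\delta))\ge\tfrac12(1-2\delta)^2$, so $\alpha_t\ge\tfrac{1}{2\ln 2}=\tfrac{\log_2 e}{2}$; strictness for $\delta\ne\tfrac12$ follows from the strict form of Pinsker, or equivalently from the expansion $1-h(\tfrac12+x)=\tfrac{2x^2}{\ln 2}+\tfrac{4x^4}{3\ln 2}+O(x^6)$, which also shows the bound is asymptotically tight as $t\to0$. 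Alternatively, this inequality has already been stated in the paper and may simply be invoked.

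There is no genuine obstacle: the corollary just repackages Corollary~\ref{cor:BMS1} with the monotonicity of erasure-channel mutual information in the channel capacity. The only point requiring a moment's care is the strictness in $\alpha_t>\tfrac{\log_2 e}{2}$, but even the non-strict bound, immediate from Pinsker, suffices for every downstream use, where $\alpha_t$ is only needed as a universal positive constant.
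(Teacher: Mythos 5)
Your proposal is correct and follows essentially the same route as the paper: the paper likewise deduces the corollary from Corollary~\ref{cor:BMS1} by noting that $t\leq\eta_t$ makes the BEC of capacity $t$ degraded with respect to the BEC of capacity $\eta_t$, hence $I_{\mathrm{BEC}}^{(t)}(X^n;Y^n)\leq I_{\mathrm{BEC}}^{(\eta_t)}(X^n;Y^n)$. Your Pinsker-based verification of $\alpha_t>\tfrac{\log_2 e}{2}$ is a welcome addition, since the paper only asserts this numerical fact without proof.
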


\begin{proof}[Proof of Theorem~\ref{thm:main}]
We may assume without loss of generality that $\mathrm{rank}(C)>0$, since otherwise $X^n$ is deterministic, so that $\Ibec(X^n;Y^n)=\Ibsc(X^n;Y^n)=0$ and the statement holds trivially.

Since $X^n$ is uniform over a shifted linear code with positive rank, we have that $X_i\sim\mathrm{Bern}(1/2)$ for all $i=1,\ldots,n$, and in particular 
\begin{align}
\Ibec(X_i;Y_i)=\eta_t,~\Ibsc(X_i;Y_i)=t,~~~\forall i=1,\ldots,n. 
\label{eq:capachievingmarginals}
\end{align}
This also implies the statement for $n=1$. We proceed by induction. Assume the statement holds for all linear codes and shifts in $\{0,1\}^{n-1}$.

Note that for any memoryless channel, and in particular for the BEC and the BSC, we have that
\begin{align}
&I(X^n;Y^n)=I(X^{n-1},X_n;Y^{n-1},Y_n)\nonumber\\
&=I(X^{n-1};Y^{n-1},Y_n)+I(X_n;Y^{n-1},Y_n|X^{n-1})\nonumber\\
&=I(X^{n-1};Y^{n-1})+I(X^{n-1};Y_n|Y^{n-1})+I(X_n;Y_n|X^{n-1})\nonumber\\
&=I(X^{n-1};Y^{n-1})+I(X_n;Y_n)-I(Y^{n-1};Y_n).
\label{eq:MIgenformula}
\end{align}
By~\eqref{eq:capachievingmarginals}, for the BEC, we therefore have that
\begin{align}
&\Ibec(X^n;Y^n)=\Ibec(X^{n-1};Y^{n-1})+\eta_t-\Ibec(Y^{n-1};Y_n)\nonumber\\
&=\Ibec(X^{n-1};Y^{n-1})+\eta_t-\eta_t\Ibec(Y^{n-1};X_n),
\label{eq:BECn}
\end{align}
while for the BSC, we have that
\begin{align}
&\Ibsc(X^n;Y^n)=\Ibsc(X^{n-1};Y^{n-1})+t-\Ibsc(Y^{n-1};Y_n)\nonumber\\
&\geq \Ibsc(X^{n-1};Y^{n-1})+t-\eta_t\Ibsc(Y^{n-1};X_n).\label{eq:BSCn}
\end{align}
In the last inequality we have used the strong data processing inequality (SDPI), stating that for any $U-X_n-Y_n$, where $P_{Y_n|X_n}$ is a BSC of capacity $t$,  we have that $I(U;Y_n)\leq \eta_t I(U;X_n)$. Since $Y^{n-1}-X_n-Y_n$ forms a Markov chain in this order, we can indeed apply the SDPI with $U=Y^{n-1}$ and obtain $\Ibsc(Y^{n-1};Y_n)\leq \eta_t \Ibsc(Y^{n-1};X_n)$.
We continue by noting that, since $X_n-X^{n-1}-Y^{n-1}$ forms a Markov chain in this order, we have
\begin{align}
\Ibsc&(Y^{n-1};X_n)\nonumber\\
&=\Ibsc(Y^{n-1};X^{n-1})-\Ibsc(Y^{n-1};X^{n-1}|X_n).  
\label{eq:MIdiff}
\end{align}
Substituting~\eqref{eq:MIdiff} into~\eqref{eq:BSCn}, gives
\begin{align}
\Ibsc(X^n;Y^n)&=(1-\eta_t)\Ibsc(X^{n-1};Y^{n-1})+t\nonumber\\
&+\eta_t\Ibsc(Y^{n-1};X^{n-1}|X_n).\label{eq:BSCn2}
\end{align}
The random variable $X^{n-1}$ is uniformly distributed over the projection of $C+u$ to the first $n-1$ coordinates. Since this projection is a shifted linear code in $\{0,1\}^{n-1}$, by the induction hypothesis, we have
\begin{align}
\Ibsc(X^{n-1};Y^{n-1})\geq \alpha_t \Ibec(X^{n-1};Y^{n-1}).
\label{eq:induction1}
\end{align}
Furthermore, conditioned on $X_n=0$ or $X_n=1$, we also have that $X^{n-1}$ is uniformly distributed over a shifted linear code in $\{0,1\}^{n-1}$ (though those shifted linear codes may differ for $X_n=0$ and $X_n=1$). Thus, again by the induction hypothesis
\begin{align}
&\Ibsc(X^{n-1};Y^{n-1}|X_n)\nonumber\\
&=\frac{1}{2}\Ibsc(X^{n-1};Y^{n-1}|X_n=0)\nonumber\\
&+\frac{1}{2}\Ibsc(X^{n-1};Y^{n-1}|X_n=1)\nonumber\\
&\geq \frac{\alpha_t}{2}\Ibec(X^{n-1};Y^{n-1}|X_n=0)\nonumber\\
&+\frac{\alpha_t}{2}\Ibec(X^{n-1};Y^{n-1}|X_n=1)\nonumber\\
&=\alpha_t \Ibec(X^{n-1};Y^{n-1}|X_n)\nonumber\\
&=\alpha_t\left(\Ibec(X^{n-1};Y^{n-1})- \Ibec(X_n;Y^{n-1})\right)
\label{eq:induction2}
\end{align}
where the last equality holds since $Y^{n-1}-X^{n-1}-X_n$ forms a Markov chain in this order, as in~\eqref{eq:MIdiff}. Substituting~\eqref{eq:induction1} and~\eqref{eq:induction2} into~\eqref{eq:BSCn2}, we obtain
\begin{align}
&\Ibsc(X^n;Y^n)\geq \alpha_t(1-\eta_t)\Ibec(X^{n-1};Y^{n-1})+t\nonumber\\
&+\alpha_t\eta_t\left(\Ibec(X^{n-1};Y^{n-1})- \Ibec(X_n;Y^{n-1})\right)\nonumber\\
&=\alpha_t\Ibec(X^{n-1};Y^{n-1})+t-\alpha_t\eta_t\Ibec(X_n;Y^{n-1})\nonumber\\
&=\alpha_t\left[\Ibec(X^{n-1};Y^{n-1})+\eta_t-\eta_t \Ibec(X_n;Y^{n-1})\right]\nonumber\\
&=\alpha_t \Ibec(X^n;Y^n),\label{eq:BSCn3}
\end{align}
where in the last equality we have used~\eqref{eq:BECn} and the fact that $\alpha_t=\frac{t}{\eta_t}$. This completes the proof.
\end{proof}

%%%%%%
%% Appendix:
%% If needed a single appendix is created by
%%
%\appendix
%%
%% If several appendices are needed, then the command
%%
% \appendices
%%
%% in combination with further \section commands can be used.
%%%%%%

\section{Information Combining}
 
Let $X\sim P_X$, and let $W=W_{Y|X}$ be some channel with input alphabet $\mathcal{X}$ and output alphabet $\mathcal{Y}$. Assume $X$ is transmitted $n$ times through  $W$,  and the output is $Y^n=(Y_1,\ldots,Y_n)$. What can we say about $I(X;Y^n)$? Since the channel from $X^n=(X,\ldots,X)$ to $Y^n$ is memoryless, we have that
\begin{align}
I(X;Y^n)\leq\sum_{i=1}^n I(X;Y_i)=n I(X;Y)=n I(P_X,W).
\end{align} 
Combining this with the trivial upper bound $I(X;Y^n)\leq H(X)=H(P_X)$, we have that $I(X;Y^n)\leq \min\{H(P_X),nI(P_X,W)\}$. Denote by $\mathrm{EC}_e$ the erasure channel with input $\mathcal{X}$ whose output is $X$ with probability $1-e$ and $?$ with probability $e$. Let
\begin{align*}
C_{\mathrm{MC}}&=C_{\mathrm{MC}}(W)\nonumber\\
&=\min\{1-e\in[0,1]~:~W~\text{is less capable than } \mathrm{EC}_{e}\}.
\end{align*}
By tensorization of the more capable partial order, we have 
\begin{align}
I(X;Y^n)=I(P_X,W^{\otimes n})&\leq I(P,\mathrm{EC}_{1-C_{\mathrm{MC}}}^{\otimes n})\nonumber\\
&=(1-(1-C_{\mathrm{MC}})^n)H(P_X).
\label{eq:UBcombiningMC}
\end{align}
% Let $\eta=\eta(P_X,W)$ be the input-dependant SDPI coefficient of the channel $W$ with input distribution $P_X$. Recall that
% \begin{align}
% 	\eta&=\eta(W)\nonumber\\
% 	&=\min\{1-e\in[0,1]~:~W~\text{is less capable than } \mathrm{EC}_{1-e}\},
% \end{align}
% and therefore, since more noisy is a stronger notion than less capable, we have that  $C_{\mathrm{MC}}\leq \eta$. Consequently, we can further relax the upper bound~\eqref{eq:UBcombiningMC} to 
% \begin{align}
% 	I(X;Y^n)\leq (1-(1-\eta)^n)H(P_X).
% 	\label{eq:UBcombiningLN}
% \end{align}
Our main result is a lower bound on $I(X;Y^n)$ taking a similar form to~\eqref{eq:UBcombiningMC} .
\begin{theorem}
Let $X\sim P_X$ and $W=W_{Y|X}$ be an input distribution and a channel with input-dependant SDPI coefficient satisfying $\eta(P_X,W)\leq \eta$.  Assume $X$ is transmitted $n$ times through  $W$,  and the output is $Y^n=(Y_1,\ldots,Y_n)$. Then,
\begin{align}
I(X;Y^n)\geq \alpha (1-(1-\eta)^n)H(P_X),
\label{eq:combining_lower_bound}
\end{align}
where 
\begin{align}
\alpha=\frac{I(P_X,W)}{\eta H(P_X)}.
\end{align}
\label{thm:combining}
\end{theorem}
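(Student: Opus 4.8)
The plan is to run an induction on the number $k$ of channel uses, closely following the proof of Theorem~\ref{thm:main}: I will extract a one-step recursion in which $I(X;Y^k)$ gains a fresh $I(P_X,W)$ bits while the old information $I(X;Y^{k-1})$ is discounted by the factor $1-\eta$, and then unroll it. Throughout I assume $\eta\le 1$ (this is the only regime of interest, since $\eta(P_X,W)\le 1$ always, and the stated inequality can otherwise fail).

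First, specialize the memoryless identity~\eqref{eq:MIgenformula} to the repetition input. Taking $X^{k}=(X,\dots,X)$, so that $X^{k-1}=(X,\dots,X)$ and $X_k=X$, equation~\eqref{eq:MIgenformula} becomes
\begin{align}
I(X;Y^k)=I(X;Y^{k-1})+I(X;Y_k)-I(Y^{k-1};Y_k),
\end{align}
and $I(X;Y_k)=I(P_X,W)$ because $Y_k$ is the output of a single use of $W$ on $X$. Next I bound $I(Y^{k-1};Y_k)$ with the strong data processing inequality. Since the channel uses are conditionally independent given $X$, the triple $Y^{k-1}-X-Y_k$ is Markov with $X$ distributed according to $P_X$, and hence is an admissible instance of the triple appearing in the definition of the input-dependent SDPI coefficient; taking $U=Y^{k-1}$ gives
\begin{align}
I(Y^{k-1};Y_k)\ \le\ \eta(P_X,W)\,I(Y^{k-1};X)\ \le\ \eta\,I(X;Y^{k-1}),
\end{align}
where the last step uses $\eta(P_X,W)\le\eta$ and $I(Y^{k-1};X)\ge0$. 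Writing $I_k=I(X;Y^k)$ and combining the two displays, we obtain the recursion $I_k\ge(1-\eta)I_{k-1}+I(P_X,W)$, valid for all $k\ge1$, with $I_0=0$.

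Finally, unroll the recursion. Since $1-\eta\ge0$, a one-line induction on $k$ propagates the bound and yields
\begin{align}
I(X;Y^n)=I_n\ \ge\ I(P_X,W)\sum_{j=0}^{n-1}(1-\eta)^j\ =\ \frac{I(P_X,W)}{\eta}\bigl(1-(1-\eta)^n\bigr),
\end{align}
which is exactly $\alpha\,(1-(1-\eta)^n)\,H(P_X)$ for $\alpha=I(P_X,W)/(\eta H(P_X))$; the degenerate case $\eta=0$ forces $I(P_X,W)=0$ and the claim is trivial. I do not anticipate a genuine obstacle here: the argument is in fact simpler than the proof of Theorem~\ref{thm:main}, since there is no need to track code structure under conditioning on $X_n$. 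The only steps that merit a moment's care are confirming that $(Y^{k-1},X,Y_k)$ is a legitimate instance of the Markov triple $U-X-Y$ underlying $\eta(P_X,W)$ --- it is, by conditional independence of the channel uses given $X$, with $X$-marginal $P_X$ --- and noting the harmless reduction to $\eta\le1$.
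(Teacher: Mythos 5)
Your proposal is correct and follows essentially the same route as the paper: the same chain-rule identity~\eqref{eq:MIgenformula} specialized to the repetition input, the same SDPI step on $I(Y^{k-1};Y_k)$ via the Markov chain $Y^{k-1}-X-Y_k$, and the same recursion $I_k\geq(1-\eta)I_{k-1}+I(P_X,W)$, which you unroll as a geometric sum rather than substituting the induction hypothesis in closed form --- an entirely cosmetic difference. Your explicit remarks on needing $1-\eta\geq 0$ to propagate the inequality and on the degenerate case $\eta=0$ are correct and slightly more careful than the paper's own writeup.
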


\begin{proof}	
For $n=1$ the claim holds with equality. We proceed by induction. 
Starting from~\eqref{eq:MIgenformula}, we have
% \begin{align}
% 	&I(X;Z^n)=I(X;Z^{n-1})+I(X;Z_n)-I(Z^{n-1};Z_n)\nonumber\\
% 	&=I(X;Z^{n-1})+\eta H(P_X)-\eta I(Z^{n-1};X)\nonumber\\
%  &=(1-\eta)I(X;Z^{n-1})+\eta H(P_X),
% 	\label{eq:BECnCombining}
% \end{align}
% while for the channel $W^{\otimes n}$, we have that
\begin{align}
	&I(X;Y^n)=I(X;Y^{n-1})+I(P_X,W)-I(Y^{n-1};Y_n)\nonumber\\
	&\geq I(X;Y^{n-1})+I(P_X,W)-\eta I(Y^{n-1};X)\label{eq:SDPIcombining}\\
 &=(1-\eta)I(X;Y^{n-1})+I(P_X,W),
	\label{eq:BSCnCombining}
\end{align}
where~\eqref{eq:SDPIcombining} follows from the strong data processing inequality, as $Y^{n-1}-X-Y_n$ forms a Markov chain in this order.
Using the induction hypothesis $I(X;Y^{n-1})\geq \alpha (1-(1-\eta)^{n-1})H(P_X)$, we further lower bound~\eqref{eq:BSCnCombining} as
\begin{align}
I(X&;Y^n)\geq \alpha  (1-(1-\eta)^{n-1})(1-\eta)H(P_X)+  I(P_X,W)\nonumber\\
&=\alpha\left[(1-(1-\eta)^{n-1})(1-\eta)H(P_X)+\frac{I(P_X;W)}{\alpha} \right]\nonumber\\
&=\alpha\left[(1-(1-\eta)^{n-1})(1-\eta)H(P_X)+\eta H(P_X) \right]\nonumber\\
&=\alpha(1-(1-\eta)^{n})H(P_X),
\end{align}
which establishes the claim.
\end{proof}

\begin{remark}
Recall that the information bottleneck curve corresponding to $(X,Y)\sim P_{XY}$ is defined as
\begin{align}
\mathrm{IB}_{P_{XY}}(R)=\max\left\{I(U;Y)~:~I(U;X)\leq R, U-X-Y \right\}.    \nonumber
\end{align}
Since $R\mapsto \mathrm{IB}_{P_{XY}}(R)\in[0,H(P_X)]$ is concave~\cite{tishby2000information,witsenhausen1975conditional}, and satisfies $\mathrm{IB}_{P_{XY}}(0)=0$ and $\mathrm{IB}_{P_{XY}}(H(P_X))=I(P_X;W)$, we have that $\mathrm{IB}_{P_{XY}}(R)\geq \frac{I(P_X;W)}{H(P_X)}R$. This implies that\footnote{In fact, the supremum in~\eqref{eq:SDPIasMaxRatio} is attained for $R\to 0$~\cite{anantharam2013maximal,makur2019information}}
\begin{align}
\eta(P_X,W)=\sup_{R\in(0,H(P_X)]}\frac{\mathrm{IB}_{P_{XY}}(R)}{R}\geq  \frac{I(P_X;W)}{H(P_X)}. 
\label{eq:SDPIasMaxRatio}
\end{align}
This, in turn, shows that $\alpha\leq 1$.
\end{remark}

\begin{remark}[Special case of $P_X=\mathrm{Bern}(1/2)$ and $W=\mathrm{BSC}$]
Note that for the special case of $\mathcal{X}=\{0,1\}$, $P_X=\mathrm{Bern}(1/2)$, and $W$ taken as a BSC with capacity $t$ the conclusion of Theorem~\ref{thm:combining} follows from Theorem~\ref{thm:main}. This follows since in this case $X^n=(X,\cdots,X)$ can be viewed as a random codeword in the repetition code (which is of course linear), and furthermore, for this choice of $W$ and $P_X$ we have~\cite{ahlswede1976spreading} that $\eta(P_X,W)=\eta(W)=(1-2h^{-1}(t))^2$. We may further relax Theorem~\ref{thm:combining}, as in Corollary~\ref{cor:BECtbound}, lower bounding $\eta_t=(1-2 h^{-1}(1-t))^2$ by $t$, to obtain that for $X\sim\mathrm{Bern}(1/2)$ we have
\begin{align}
 \alpha_t\cdot I_{\mathrm{BEC}}^{(t)}(X;Y^n) \leq \Ibsc(X;Y^n)\leq I_{\mathrm{BEC}}^{(t)}(X;Y^n).
\end{align}
Recalling that $\alpha_t\geq \frac{\log(e)}{2}$ for all $0<t\leq 1$, we obtain the uniform bound
\begin{align}
 \Ibsc(X;Y^n)\geq \frac{\log(e)}{2}I_{\mathrm{BEC}}^{(t)}(X;Y^n)=\frac{\log(e)}{2}(1-(1-t)^n).
 \label{eq:combiningBSCuniversal}
\end{align}
Equation~\eqref{eq:combiningBSCuniversal} may be tightened. Specifically, numerical evidence suggests that $I(X;Y^n)>0.92 (1-(1-t)^n)$,  which is significantly tighter than the uniform lower bound~\eqref{eq:combiningBSCuniversal} (recall that $\frac{\log(e)}{2}\approx 0.72$). To pursue such improvements, one may tighten the inequality in~\eqref{eq:SDPIcombining} by bounding $I(Y^{n-1};Y_n)\leq \mathrm{IB}_{P_{XY}}(I(Y^{n-1};X)$ which has a closed-form solution~\cite{erkip1998efficiency} for $P_X=\mathrm{Bern}(1/2)$ and $W=\mathrm{BSC}$.
\end{remark}

~~~~~~~
\section{Applications}
\begin{definition}
	We say that a code $C\subset\{0,1\}^n$ of rate $R$ is $\eps$-information-capacity achieving for the BEC, if for any $t>R$ we have that $I_{\mathrm{BEC}}^{(t)}(X^n;Y^n)\geq n(R-\eps)$, where $X^n\sim\mathrm{Uniform}(C)$.    
\end{definition}
We show that for codes that are $\eps$-information-capacity achieving for the BEC, the corresponding mutual information over the BSC cannot be too small. Results in similar spirit have been obtained in~\cite{hkazla2021codes}, see also~\cite{pathegama2023smoothing}.

\begin{theorem}
	Let $C\subset\{0,1\}^n$ be a linear code of rate $R$, that is $\eps$-information-capacity achieving for the BEC. Then,
	\begin{align}
		\Ibsc(X^n;Y^n)\geq n\left(1-\frac{\eps}{R} \right)\cdot\begin{cases}
			t & t<1-h\left(\frac{1-\sqrt{R}}{2} \right)\\
			\frac{t R}{\eta_t} & t\geq 1-h\left(\frac{1-\sqrt{R}}{2} \right)
		\end{cases}.
	\end{align}
\end{theorem}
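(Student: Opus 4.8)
The plan is to derive the bound by feeding Theorem~\ref{thm:main} into the definition of an $\eps$-information-capacity-achieving code, using the BEC ratio-monotonicity Lemma~\ref{lem:BECratiomonotone} to bridge the gap when $\eta_t$ is too small. First I would dispose of the trivial cases: if $\mathrm{rank}(C)=0$ (i.e.\ $R=0$) both sides vanish, and the bound is vacuous unless $\eps<R$, so assume $0<\eps<R$ and $\mathrm{rank}(C)>0$. The single elementary computation at the heart of the argument is that the threshold $t^\star \df 1-h\!\left(\tfrac{1-\sqrt R}{2}\right)$ is exactly the $t$ for which $\eta_t=R$: indeed $\eta_{t^\star}=\bigl(1-2h^{-1}(1-t^\star)\bigr)^2=\bigl(1-2\cdot\tfrac{1-\sqrt R}{2}\bigr)^2=R$, and since $t\mapsto\eta_t$ is strictly increasing on $[0,1]$, we have $\eta_t\geq R\iff t\geq t^\star$. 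This is why the theorem's case split occurs precisely at $t^\star$.

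Next I would handle the regime $t\geq t^\star$, where $\eta_t\geq R$. Theorem~\ref{thm:main} (with shift $u=0$) gives $\Ibsc(X^n;Y^n)\geq\tfrac{t}{\eta_t}\,\Ibec(X^n;Y^n)$, where $\Ibec(X^n;Y^n)=I_{\mathrm{BEC}}^{(\eta_t)}(X^n;Y^n)$. Since $\eta_t\geq R$, the $\eps$-information-capacity-achieving property gives $I_{\mathrm{BEC}}^{(\eta_t)}(X^n;Y^n)\geq n(R-\eps)$ — immediately when $\eta_t>R$, and for the boundary value $\eta_t=R$ by a routine limiting argument, since $s\mapsto I_{\mathrm{BEC}}^{(s)}(X^n;Y^n)$ is a polynomial in $1-s$ and hence continuous. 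Substituting yields $\Ibsc(X^n;Y^n)\geq\tfrac{t}{\eta_t}n(R-\eps)=n\bigl(1-\tfrac{\eps}{R}\bigr)\tfrac{tR}{\eta_t}$, which is the desired bound in this regime.

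Then I would treat the regime $t<t^\star$, where $0<\eta_t<R$ (the value $t=0$ being trivial). Here the naive move — applying $\eps$-capacity-achieving at capacity $\eta_t$ — fails because $\eta_t<R$. Instead, I invoke Lemma~\ref{lem:BECratiomonotone}, which asserts that $s\mapsto I_{\mathrm{BEC}}^{(s)}(X^n;Y^n)/s$ is non-increasing; hence for every $s\geq\eta_t$ one has $\tfrac{I_{\mathrm{BEC}}^{(\eta_t)}(X^n;Y^n)}{\eta_t}\geq\tfrac{I_{\mathrm{BEC}}^{(s)}(X^n;Y^n)}{s}$. Letting $s\downarrow R$ and using $I_{\mathrm{BEC}}^{(s)}(X^n;Y^n)\geq n(R-\eps)$ for $s>R$ gives $\tfrac{I_{\mathrm{BEC}}^{(\eta_t)}(X^n;Y^n)}{\eta_t}\geq\tfrac{n(R-\eps)}{R}$. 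Combining with Theorem~\ref{thm:main} in the form $\Ibsc(X^n;Y^n)\geq t\cdot\tfrac{I_{\mathrm{BEC}}^{(\eta_t)}(X^n;Y^n)}{\eta_t}$ yields $\Ibsc(X^n;Y^n)\geq n\bigl(1-\tfrac{\eps}{R}\bigr)t$, completing this regime. The two regimes agree at $t=t^\star$ because $\eta_{t^\star}=R$.

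I do not anticipate a real obstacle: once one spots that $t^\star$ is the $\eta$-preimage of $R$, the proof is a short composition of results already available. The only point requiring a little care is the small-capacity regime, where one must route the estimate through the ratio-monotonicity lemma rather than applying the capacity-achieving hypothesis directly (which would be at the useless capacity $\eta_t<R$); a secondary, purely technical point is the passage to the limit $s=R$, settled by continuity of $s\mapsto I_{\mathrm{BEC}}^{(s)}(X^n;Y^n)$.
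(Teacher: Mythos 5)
Your proposal is correct and follows essentially the same route as the paper: identify $t^\star=1-h\bigl(\tfrac{1-\sqrt R}{2}\bigr)$ as the solution of $\eta_{t^\star}=R$, apply the capacity-achieving hypothesis directly at capacity $\eta_t\geq R$ in the high-$t$ regime, and route through Lemma~\ref{lem:BECratiomonotone} in the low-$t$ regime before invoking Theorem~\ref{thm:main}. Your added care at the boundary $\eta_t=R$ (a limiting/continuity argument, since the definition only speaks of $t>R$) is a small technical refinement the paper glosses over.
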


\begin{proof}
	Let $t^*=1-h\left(\frac{1-\sqrt{R}}{2} \right)$, be such that $\eta_{t^*}=R$. Then,
	\begin{align}
		\frac{I_{\mathrm{BEC}}^{(\eta_{t^*})}(X^n;Y^n)}{\eta_{t^*}}\geq 
		\frac{n(R-\eps)}{R}=n\left(1-\frac{\eps}{R} \right).
	\end{align}
	By Lemma~\ref{lem:BECratiomonotone} in the Appendix,
	\begin{align}
		t\mapsto \frac{I_{\mathrm{BEC}}^{(t)}(X^n;Y^n)}{t}    
	\end{align}
	is non increasing, and hence, for all $t\leq t^*$, it holds that $\frac{I_{\mathrm{BEC}}^{(\eta_t)}(X^n;Y^n)}{\eta_t}\geq n\left(1-\frac{\eps}{R} \right)$. By Theorem~\ref{thm:main}, we therefore have that
	\begin{align}
		\Ibsc(X^n;Y^n)\geq nt\left(1-\frac{\eps}{R} \right),~~t\leq t^*.
	\end{align}
	For $t>t^*$ we have that $I_{\mathrm{BEC}}^{(t)}(X^n;Y^n)>nR(1-\eps)$, by the assumption that $C$ is a rate $R$ code that is $\eps$-information-capacity achieving for the BEC. Thus, by Theorem~\ref{thm:main}
 \begin{align}
		\Ibsc(X^n;Y^n)\geq t\cdot \frac{nR\left(1-\eps \right)}{\eta_t},~~t> t^*,
	\end{align}
 which establishes our claim.
\end{proof}

\appendix

\begin{lemma}
Fix an input distribution $X^n$ on $\{0,1\}^n$. The mapping $t\mapsto \frac{I_{\mathrm{BEC}}^{(t)}(X^n;Y^n)}{t} $ is non increasing.   
\label{lem:BECratiomonotone}
\end{lemma}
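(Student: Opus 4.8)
The plan is to reduce the monotonicity of $t\mapsto I_{\mathrm{BEC}}^{(t)}(X^n;Y^n)/t$ on $(0,1]$ to a clean inequality about entropies of randomly subsampled coordinates, and to prove that inequality directly from the chain rule. First I would record the standard closed form for the BEC mutual information: identifying the BEC of capacity $t$ with the erasure channel of erasure probability $1-t$, let $S\subseteq[n]$ be the (random) set of unerased coordinates, so that each $i\in[n]$ lies in $S$ independently with probability $t$, and write $X_S$ for the subvector of $X^n$ indexed by $S$. Given $Y^n$ whose unerased set is $S$, the posterior law of $X^n$ is the conditional law of $X_{[n]\setminus S}$ given $X_S$; since $S$ is independent of $X^n$, this gives $H(X^n\mid Y^n)=\mathbb{E}_S[H(X_{[n]\setminus S}\mid X_S)]$, and combining with $H(X^n)=H(X_S)+H(X_{[n]\setminus S}\mid X_S)$ we obtain $I_{\mathrm{BEC}}^{(t)}(X^n;Y^n)=\mathbb{E}_S[H(X_S)]$ with $S\sim\mathrm{Bern}(t)^{\otimes n}$.

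Next, fix $0<t\le t'\le1$ and set $q\df t/t'\in(0,1]$. I couple the two erasure patterns by first drawing $S'\sim\mathrm{Bern}(t')^{\otimes n}$ and then forming $S\subseteq S'$ by keeping each element of $S'$ independently with probability $q$; then $S\sim\mathrm{Bern}(t)^{\otimes n}$, so by the closed form $I_{\mathrm{BEC}}^{(t)}(X^n;Y^n)=\mathbb{E}_{S'}\mathbb{E}_{S\mid S'}[H(X_S)]$. Thus the lemma reduces to the \emph{subsampling inequality}: for any binary random vector $Z^m$ and any random $S\subseteq[m]$ formed by keeping each coordinate independently with probability $q$, one has $\mathbb{E}_S[H(Z_S)]\ge q\,H(Z^m)$. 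Applying this conditionally on $S'$ (to the conditional law of $X_{S'}$ given $S'$, which is an arbitrary distribution on a binary vector) gives $\mathbb{E}_{S\mid S'}[H(X_S)]\ge q\,H(X_{S'})$, and averaging over $S'$ while using the closed form again yields $I_{\mathrm{BEC}}^{(t)}(X^n;Y^n)\ge q\,I_{\mathrm{BEC}}^{(t')}(X^n;Y^n)=\tfrac{t}{t'}\,I_{\mathrm{BEC}}^{(t')}(X^n;Y^n)$, i.e.\ $I_{\mathrm{BEC}}^{(t)}(X^n;Y^n)/t\ge I_{\mathrm{BEC}}^{(t')}(X^n;Y^n)/t'$.

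It remains to prove the subsampling inequality. Let $B_1,\dots,B_m\sim\mathrm{Bern}(q)$ be the i.i.d.\ indicators defining $S$, fix the natural order on $[m]$, and apply the chain rule along it: $H(Z_S)=\sum_{i\in S}H(Z_i\mid Z_{S\cap[i-1]})$, so that $\mathbb{E}_S[H(Z_S)]=\sum_{i=1}^{m}\mathbb{E}\big[\mathbf{1}[i\in S]\,H(Z_i\mid Z_{S\cap[i-1]})\big]$. For each $i$ the factor $\mathbf{1}[i\in S]$ depends only on $B_i$ while $H(Z_i\mid Z_{S\cap[i-1]})$ depends only on $B_1,\dots,B_{i-1}$, so the two are independent and the $i$-th summand equals $q\cdot\mathbb{E}[H(Z_i\mid Z_{S\cap[i-1]})]$. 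Finally, for every realization of the set $S\cap[i-1]\subseteq[i-1]$ we have $H(Z_i\mid Z_{S\cap[i-1]})\ge H(Z_i\mid Z_{[i-1]})$, since conditioning on additional coordinates does not increase conditional entropy; summing over $i$ gives $\mathbb{E}_S[H(Z_S)]\ge q\sum_{i=1}^{m}H(Z_i\mid Z_{[i-1]})=q\,H(Z^m)$.

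I do not expect a real obstacle; the proof uses only the chain rule, independence, and monotonicity of conditional entropy. The points to keep straight are that the subsampling inequality must be stated for an arbitrary joint law of $Z^m$ (so that it applies to the conditional law of $X_{S'}$ given $S'$), and that ``conditioning does not increase entropy'' is invoked pointwise over the random set $S\cap[i-1]$. An equivalent route is a one-line induction on $m$: conditioning on whether $m\in S$ and using the chain rule, the inductive step collapses to $I(Z^{m-1};Z_m)\ge0$.
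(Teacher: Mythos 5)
Your proof is correct and follows essentially the same route as the paper's: both realize the capacity-$t$ BEC as a capacity-$t'$ BEC followed by independent subsampling of the surviving coordinates, and both reduce the claim to the inequality $\mathbb{E}_S[H(Z_S)]\ge q\,H(Z^m)$ for i.i.d.\ $\mathrm{Bern}(q)$ subsampling of the index set. The only difference is that the paper invokes Shearer's Lemma for this last step, whereas you prove it directly from the chain rule, which just makes the argument self-contained.
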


\begin{proof}
Let $0\leq t_1\leq t\leq 1$. Note that a BEC with capacity $t_1$ can be obtained by concatenating a BEC with capacity $t$, denoted $P^{\otimes n}_{Y|X}$ and a $\mathrm{EC}_{1-t'}$, where $t'=\frac{t_1}{t}$, denoted $W^{\otimes n}_{Z|Y}$, where $\mathcal{X}=\{0,1\}$ and $\mathcal{Y}=\mathcal{Z}=\{0,?,1\}$. Let $S_t\subset\{0,1\}^n$ denote the (random) set of indices not erased by $P^{\otimes n}_{Y|X}$ and $S_{t'}$ the (random) set of indices not erased by $W^{\otimes n}_{Y|X}$. For a set $S\subset\{0,1\}^n$ we denote by $X_S$ the restriction of $X^n$ to the indices included in $S$. We have
\begin{align}
I(X^n;Z^n)&=I(X^n;X_{S_t\cap S_{t'}},S_t \cap S_{t'})\nonumber\\
&=I(X^n;X_{S_t\cap S_{t'}}|S_t\cap S_{t'})\label{eq:SindepX}\\
&=H(X_{S_t\cap S_{t'}}|S_t\cap S_{t'})\nonumber\\
&=H(X_{S_t\cap S_{t'}}|S_t, S_{t'}),
\end{align}
where~\eqref{eq:SindepX} follows since $(S_t,S_{t'})$ are statistically independent of $X^n$.
We have
\begin{align}
H(X_{S_t\cap S_{t'}}|S_t, S_{t'})&=\mathbb{E}_{s_t\sim P_{S_t}}[H(X_{S_t'\cap s_t}|S_{t'},S_t=s_t)]\nonumber\\
    &\geq t'\cdot \mathbb{E}_{s_t\sim P_{S_t}}[H(X_{s_t}|S_t=s_t)]\label{eq:shearerLemma}\\
&=t'\cdot H(X_{S_t}|S_t)\nonumber\\
&=t'\cdot I(X^n;Y^n),
\end{align}
where in~\eqref{eq:shearerLemma} we have used Shearer's Lemma, see e.g.~\cite[Theorem 1.8]{PW_book}. Recalling that $t'=\frac{t_1}{t}$, we have therefore obtained that
\begin{align}
\frac{I(X^n;Z^n)}{t_1}\geq \frac{I(X^n;Y^n)}{t},    
\end{align}
as claimed.
\end{proof}

\section*{Acknowledgment}

The work of Or Ordentlich was supported by the ISRAEL SCIENCE 
FOUNDATION (ISF),
grant No.1641/21. 
The work of Uri Erez was supported by the ISRAEL SCIENCE 
FOUNDATION (ISF),
grant No. 588/19 and 736/23. The work of Shlomo Shamai (Shitz) was supported by the ISRAEL SCIENCE 
FOUNDATION (ISF),
grant No. 1897/19.
%%%%%%
%% To balance the columns at the last page of the paper use this
%% command:
%%
%\enlargethispage{-1.2cm} 
%%
%% If the balancing should occur in the middle of the references, use
%% the following trigger:
%%
 \IEEEtriggeratref{11}
%%
%% which triggers a \newpage (i.e., new column) just before the given
%% reference number. Note that you need to adapt this if you modify
%% the paper.  The "triggered" command can be changed if desired:
%%
%\IEEEtriggercmd{\enlargethispage{-20cm}}
%%
%%%%%%

%%%%%%
%% References:
%% We recommend the usage of BibTeX:
%%
\newpage
\bibliographystyle{IEEEtran}
\bibliography{LessCapable}

% Generated by IEEEtran.bst, version: 1.14 (2015/08/26)
\begin{thebibliography}{10}
\providecommand{\url}[1]{#1}
\csname url@samestyle\endcsname
\providecommand{\newblock}{\relax}
\providecommand{\bibinfo}[2]{#2}
\providecommand{\BIBentrySTDinterwordspacing}{\spaceskip=0pt\relax}
\providecommand{\BIBentryALTinterwordstretchfactor}{4}
\providecommand{\BIBentryALTinterwordspacing}{\spaceskip=\fontdimen2\font plus
\BIBentryALTinterwordstretchfactor\fontdimen3\font minus
  \fontdimen4\font\relax}
\providecommand{\BIBforeignlanguage}[2]{{%
\expandafter\ifx\csname l@#1\endcsname\relax
\typeout{** WARNING: IEEEtran.bst: No hyphenation pattern has been}%
\typeout{** loaded for the language `#1'. Using the pattern for}%
\typeout{** the default language instead.}%
\else
\language=\csname l@#1\endcsname
\fi
#2}}
\providecommand{\BIBdecl}{\relax}
\BIBdecl

\bibitem{CsiszarKornerBook}
I.~Csisz{\'a}r and J.~K{\"o}rner, \emph{Information theory: coding theorems for
  discrete memoryless systems}.\hskip 1em plus 0.5em minus 0.4em\relax
  Cambridge University Press, 2011.

\bibitem{el2011network}
A.~El~Gamal and Y.-H. Kim, \emph{Network information theory}.\hskip 1em plus
  0.5em minus 0.4em\relax Cambridge University Press, 2011.

\bibitem{PW_book}
Y.~Polyanskiy and Y.~Wu, ``Information theory: From coding to learning,''
  \emph{Book draft}, 2022.

\bibitem{PW17_SDPI}
------, ``Strong data-processing inequalities for channels and {B}ayesian
  networks,'' in \emph{Convexity and Concentration}.\hskip 1em plus 0.5em minus
  0.4em\relax Springer, 2017, pp. 211--249.

\bibitem{makur2018comparison}
A.~Makur and Y.~Polyanskiy, ``Comparison of channels: Criteria for domination
  by a symmetric channel,'' \emph{IEEE Transactions on Information Theory},
  vol.~64, no.~8, pp. 5704--5725, 2018.

\bibitem{OP21}
O.~Ordentlich and Y.~Polyanskiy, ``Strong data processing constant is achieved
  by binary inputs,'' \emph{IEEE Transactions on Information Theory}, vol.~68,
  no.~3, pp. 1480--1481, 2021.

\bibitem{sasoglu2011thesis}
E.~Sasoglu, ``Polar coding theorems for discrete systems,'' EPFL, Tech. Rep.,
  2011.

\bibitem{Sam16}
A.~Samorodnitsky, ``On the entropy of a noisy function,'' \emph{IEEE
  Transactions on Information Theory}, vol.~62, no.~10, pp. 5446--5464, 2016.

\bibitem{land2006information}
I.~Land, J.~Huber \emph{et~al.}, ``Information combining,'' \emph{Foundations
  and Trends{\textregistered} in Communications and Information Theory},
  vol.~3, no.~3, pp. 227--330, 2006.

\bibitem{sutskover2005extremes}
I.~Sutskover, S.~Shamai, and J.~Ziv, ``Extremes of information combining,''
  \emph{IEEE Transactions on Information Theory}, vol.~51, no.~4, pp.
  1313--1325, 2005.

\bibitem{ahlswede1976spreading}
R.~Ahlswede and P.~G{\'a}cs, ``Spreading of sets in product spaces and
  hypercontraction of the markov operator,'' \emph{The annals of probability},
  pp. 925--939, 1976.

\bibitem{ordentlich16}
O.~Ordentlich, ``Novel lower bounds on the entropy rate of binary hidden markov
  processes,'' in \emph{2016 IEEE International Symposium on Information Theory
  (ISIT)}.\hskip 1em plus 0.5em minus 0.4em\relax IEEE, 2016, pp. 690--694.

\bibitem{wyner1973theorem}
A.~Wyner and J.~Ziv, ``A theorem on the entropy of certain binary sequences and
  applications--{I},'' \emph{IEEE Transactions on Information Theory}, vol.~19,
  no.~6, pp. 769--772, 1973.

\bibitem{tishby2000information}
N.~Tishby, F.~C. Pereira, and W.~Bialek, ``The information bottleneck method,''
  \emph{arXiv preprint physics/0004057}, 2000.

\bibitem{witsenhausen1975conditional}
H.~Witsenhausen and A.~Wyner, ``A conditional entropy bound for a pair of
  discrete random variables,'' \emph{IEEE Transactions on Information Theory},
  vol.~21, no.~5, pp. 493--501, 1975.

\bibitem{anantharam2013maximal}
V.~Anantharam, A.~Gohari, S.~Kamath, and C.~Nair, ``On maximal correlation,
  hypercontractivity, and the data processing inequality studied by {E}rkip and
  {C}over,'' \emph{arXiv preprint arXiv:1304.6133}, 2013.

\bibitem{makur2019information}
A.~Makur, ``Information contraction and decomposition,'' Ph.D. dissertation,
  Massachusetts Institute of Technology, 2019.

\bibitem{erkip1998efficiency}
E.~Erkip and T.~M. Cover, ``The efficiency of investment information,''
  \emph{IEEE Transactions on information theory}, vol.~44, no.~3, pp.
  1026--1040, 1998.

\bibitem{hkazla2021codes}
J.~H{\k{a}}z{\l}a, A.~Samorodnitsky, and O.~Sberlo, ``On codes decoding a
  constant fraction of errors on the {BSC},'' in \emph{Proceedings of the 53rd
  Annual ACM SIGACT Symposium on Theory of Computing}, 2021, pp. 1479--1488.

\bibitem{pathegama2023smoothing}
M.~Pathegama and A.~Barg, ``Smoothing of binary codes, uniform distributions,
  and applications,'' \emph{Entropy}, vol.~25, no.~11, p. 1515, 2023.

\end{thebibliography}
%%
%% where we here have assumed the existence of the files
%% definitions.bib and bibliofile.bib.
%% BibTeX documentation can be obtained at:
%% http://www.ctan.org/tex-archive/biblio/bibtex/contrib/doc/
%%%%%%

%% Or you use manual references (pay attention to consistency and the
%% formatting style!):
% \begin{thebibliography}{9}

% \bibitem{Laport:LaTeX}
% L.~Lamport,
%   \emph{\LaTeX: A Document Preparation System,} 
%   Addison-Wesley, Reading, Massachusetts, USA, 2nd~ed., 1994. 

% \bibitem{GMS:LaTeXComp}
% F.~Mittelbach, M,~Goossens, J.~Braams, D.~Carlisle, and
% C.~Rowley, \emph{The {\LaTeX} Companion,} Addison-Wesley,
% Reading, Massachusetts, USA, 2nd~ed., 2004.

% \bibitem{oetiker_latex}
% T.~Oetiker, H.~Partl, I.~Hyna, and E.~Schlegl, \emph{The Not So Short
%   Introduction to {\LaTeX2e}}, version 5.06, Jun.~20, 2016. [Online].
%   Available: \url{https://tobi.oetiker.ch/lshort/}

% \bibitem{typesetmoser}
% S.~M. Moser, \emph{How to Typeset Equations in {\LaTeX}}, version 4.6,
%   Sep. 29, 2017. [Online]. Available:
%   \url{http://moser-isi.ethz.ch/manuals.html#eqlatex}

% \bibitem{IEEE:pdfsettings}
% IEEE, \emph{Preparing Conference Content for the IEEE Xplore Digital
%   Library.} [Online]. Available:
%   \url{http://www.ieee.org/conferences_events/conferences/organizers/pubs/preparing_content.html}

% \bibitem{IEEE:AuthorToolbox}
% IEEE, \emph{Author Digital Toolbox.} [Online.] Available:
%   \url{http://www.ieee.org/publications_standards/publications/authors/authors_journals.html}

% \end{thebibliography}

\end{document}